\newtheorem{theorem}{Theorem}
\newtheorem{example}[theorem]{Example}
\newcommand{\qed}{\hfill$\square$}
\newenvironment{proof}{%
 \noindent{\em Proof.\ }}{%
 \hspace*{\fill}\qed \\
 \vspace{2ex}}
\DeclareMathAlphabet{\bm}{OML}{cmm}{b}{it}
\newcommand{\rom}[1]{\mathrm{#1}}
\newcommand{\san}[1]{\mathsf{#1}}
\newenvironment{protocol}[1][htb]
  {%
   \begin{algorithm}[#1]%
  }{\end{algorithm}}
\begin{document}
%
\title{An Improved Lower Bound on Oblivious Transfer Capacity via Interactive Erasure Emulation}



%
\author{\IEEEauthorblockN{So Suda, Shun Watanabe, and Haruya Yamaguchi}
\IEEEauthorblockA{Department of Computer and Information Sciences, 
Tokyo University of Agriculture and Technology, Japan, \\
E-mail:s238948y@st.go.tuat.ac.jp; shunwata@cc.tuat.ac.jp; s225015t@st.go.tuat.ac.jp}}


\maketitle

\begin{abstract}
We revisit the oblivious transfer (OT) capacities of noisy channels against the passive adversary,
which have been identified only for a limited class of channels.
In the literature, the general construction of oblivious transfer has been known only for generalized
erasure channels (GECs); for other channels, we first convert a given channel to a GEC via alphabet extension 
and erasure emulation, and then apply the general construction for GEC. 
In this paper, we derive an improved lower bound on the OT capacity
of the binary symmetric channel (BSC) and binary symmetric erasure channel (BSEC) by proposing a new protocol; 
by using interactive communication between the sender and the receiver,
our protocol emulates erasure events recursively in multiple rounds.
We also discuss a potential necessity of multiple rounds interactive communication to attain the OT capacity.
\end{abstract}


%
\IEEEpeerreviewmaketitle

\section{Introduction} \label{section:introduction}

The secure computation introduced by Yao is one of the most important problems in modern cryptography \cite{Yao82}.
The goal of the parties is to compute a function in such a manner that the parties do not learn any additional information
about the inputs of other parties more than the output of the function value itself.    
For two-party secure computation, it is known that
non-trivial functions are not securely computable from scratch \cite{Bea89b, Kus92} (see also \cite{NTW15}).
In order to securely compute non-trivial functions, we usually assume that the parties can
access a primitive termed the oblivious transfer (OT) \cite{EveGolLem85}. If the OT is available, then
it is known that any functions are securely computable \cite{GolVai88, Kilian88}.

It is known that information-theoretically secure oblivious transfer can be implemented
from noisy channels between the parties \cite{CreKil88}. 
Characterizations of resources to realize secure computation is regarded as an important problem, 
and it has been actively studied in the literature
\cite{MajPraRos:09, MajPraRos12, MajPraRos13, KraQua:11, KraMajPraSah:14}.
Apart from the feasibility of secure computation, it is also interesting, from information-theoretic viewpoint, 
to characterize the efficiency of realizing secure computation from a given resource.
Such a direction of research was initiated by Nascimento and Winter in \cite{NasWin08};
they introduced the concept of the oblivious transfer (OT) capacity, which is defined as the length
of the string oblivious transfer that can be implemented per channel use.
Later, more general upper and lower bounds on the OT capacity were further studied by
Ahlswede and Csisz\'ar in \cite{AhlCsi13}. Based on the tension region idea in \cite{PraPra14},
Rao and Prabhakaran derived the state-of-the-art upper bound on the OT capacity in \cite{RaoPra14}. 

In this paper, we revisit the OT capacities of noisy channels against the passive adversary.
In the literature, the general construction of oblivious transfer has been known only for generalized
erasure channels (GECs); for other channels, we first convert a given channel to a GEC via alphabet extension 
and erasure emulation, and then apply the general construction for GEC \cite{AhlCsi13}. 
The binary symmetric channel (BSC) is a typical example that is not a GEC, and the OT capacity
is not known. Even though the binary symmetric erasure channel (BSEC) is a GEC, 
the OT capacity is not known when the erasure probability is less than half \cite{AhlCsi13}.
In this paper, we derive an improved lower bound on the OT capacity
of the BSC and BSEC by proposing a new protocol; by using interactive communication between the sender and the receiver,
our protocol emulates erasure events recursively in multiple rounds.
We numerically demonstrate that, for a certain range of parameters, 
our improved lower bound almost matches with the upper bound (but still has a gap). 

For an example of BSEC-like channel, we show that a protocol based on the same idea as above
attains the OT capacity; we also argue that the OT capacity of this example is unlikely to be attained by
one-round protocols, which suggest a potential necessity of multiple rounds interactive communication to attain the OT capacity.

The rest of the paper is organized as follows.
In Section \ref{section:problem}, we introduced the problem formulation of oblivious transfer and the OT capacity.
In Section \ref{section:BSEC}, we review the standard protocol to realize OT from BSEC;
in the process of reviewing the standard protocol, we explain the building blocks that are also used in our new protocol.
Then, in Section \ref{section:recursive}, we present our new protocol to realize OT from BSEC (and BSC as a special case).
Finally, in Section \ref{section:discussion}, we discuss a potential necessity of multiple rounds interactive communication to attain the OT capacity.


\section{Problem Formulation} \label{section:problem}

We mostly follow the notations from \cite{csiszar-korner:11, TyaWat:book}.
Random variables and random vectors are denoted by capital letters such as $X$ or $X^n$; 
realizations are denoted by lowercase letters such as $x$ or $x^n$; 
ranges are denoted by corresponding calligraphic letters such as ${\cal X}$ or ${\cal X}^n$;
for an integer $n$, the index set is denoted as $[n] = \{1,\ldots,n\}$; for a subset ${\cal I} \subset [n]$
and a random vector $X^n = (X_1,\ldots,X_n)$ of length $n$, the random vector $X_{{\cal I}}$ is a collection
of $X_i$s such that $i \in {\cal I}$. The entropy and conditional entropies are denotes as $H(X)$ and $H(X|Y)$;
the binary entropy function is denoted as $H(q)= q\log \frac{1}{q}+(1-q)\log\frac{1}{(1-q)}$ for $0 \le q \le 1$.

Let $W$ be a channel from a finite input alphabet ${\cal X}$ to a finite output alphabet ${\cal Y}$.
By using the channel $n$ times, the parties, the sender ${\cal P}_1$ and the receiver ${\cal P}_2$,
shall implement the string oblivious transfer (OT) of length $l$. More specifically, the sender generates two
uniform random strings $K_0$ and $K_1$ on $\{0,1\}^l$, and the receiver generates a uniform bit $B$ on $\{0,1\}$,
as inputs to OT protocol. In the protocol, in addition to communication over the noisy channel $W^n$,
the parties are allowed to communicate over the noiseless channel, possibly multiple rounds.\footnote{In general,
the communication over the noiseless channel may occur between invocations of the noisy channels; for a more precise 
description of OT protocols over noisy channels, see \cite{AhlCsi13}.}
Let $\Pi$ be the exchanged messages over the noiseless channel, and let $X^n$ and $Y^n$ be the input
and the output of the noisy channel $W^n$. At the end of the protocol, ${\cal P}_2$ computes an estimate
$\hat{K} = \hat{K}(Y^n,\Pi, B)$ of $K_B$. 

For $0 \le \varepsilon,\delta_1,\delta_2 <1$,
we define that a protocol realizes $(\varepsilon,\delta_1,\delta_2)$-secure OT of length $l$ (for passive adversary) if
the following three requirements are satisfied:
\begin{align}
\Pr(\hat{K} \neq K_B) &\le \varepsilon, \label{eq:correct} \\
d_{\mathtt{var}}(P_{K_{\overline{B}} Y^n \Pi B}, P_{K_{\overline{B}}} \times P_{Y^n \Pi B}) &\le \delta_1, \label{eq:security-1} \\
d_{\mathtt{var}}(P_{BK_0K_1X^n\Pi}, P_B \times P_{K_0 K_1 X^n \Pi}) &\le \delta_2, \label{eq:security-2}
\end{align}
where $d_{\mathtt{var}}(P,Q)=\frac{1}{2}\sum_a|P(a)-Q(a)|$ is the variational distance. 
The requirement \eqref{eq:correct} is referred to as $\varepsilon$-correctness; the requirement \eqref{eq:security-1}
is the security for ${\cal P}_1$ in the sense that $K_{\overline{B}}$ is concealed from ${\cal P}_2$ observing $(Y^n,\Pi, B)$,
where $\overline{B}=B\oplus 1$;
and the requirement \eqref{eq:security-2} is the security for ${\cal P}_2$ in the sense that $B$ is concealed from 
${\cal P}_1$ observing $(K_0,K_1,X^n,\Pi)$.

A rate $R$ is defined to be achievable if, for every $0 \le \varepsilon,\delta_1,\delta_2 <1$ and sufficiently large $n$,
there exists an $(\varepsilon,\delta_1,\delta_2)$-secure OT protocol of length $l$ satisfying $\frac{l}{n}\ge R$.
Then, the OT capacity $C_{\mathtt{OT}}(W)$ is defined as the supremum of achievable rates. 

The OT capacity is known only for a limited class of channels: the erasure channel, and the generalized erasure channel (GEC)
for erasure probability larger than $1/2$; we will discuss a special instance of GEC, the binary symmetric erasure channel,
in more detail in Section \ref{section:BSEC}.
The following upper bounds on the OT capacity are known in the literature:
\begin{align}
C_{\mathtt{OT}}(W) &\le \max_{P_X} \min_{P_{J|X}}\big[ I(X \wedge J|Y) + I(X \wedge Y|J) \big] \label{eq:upper-1} \\
&\le \max_{P_X}\min\big[ I(X\wedge Y), H(X|Y) \big], \label{eq:upper-2}
\end{align}
where the minimization in \eqref{eq:upper-1} is taken over channel $P_{J|X}$
from ${\cal X}$ to an auxiliary alphabet ${\cal J}$ satisfying $|{\cal J}| \le |{\cal X}||{\cal Y}| +2$.
The upper bound \eqref{eq:upper-2} was derived in \cite{AhlCsi13}, 
and the improved upper bound \eqref{eq:upper-1} was derived in \cite{RaoPra14}.

\section{Review of Standard Protocol for BSEC} \label{section:BSEC}

\begin{protocol}[h] 
\caption{Standard Protocol for BSEC} \label{protocol-BSEC}
\begin{algorithmic}[1]
\STATE \label{protocol-BSEC-step1}
${\cal P}_1$ sends the uniform bit string $X^n$ over the BSEC $W^n$, and ${\cal P}_2$ receives $Y^n$.

\STATE \label{protocol-BSEC-step2}
For each $i\in[n]$, ${\cal P}_2$ generates $V_i \in \{0,1,2\}$ as follows:
\begin{itemize}
\item If $Y_i = \san{e}$, then set $V_i=1$;

\item 
If $Y_i \in \{0,1\}$, then set $V_i=0$ with probability $\frac{p}{1-p}$ and set $V_i=2$ with probability $\frac{1-2p}{1-p}$.
\end{itemize}
Then, ${\cal P}_2$ sets $\tilde{{\cal I}}_b = \{ i \in [n]: V_i = b\}$ for $b=0,1$. If 
$|\tilde{{\cal I}}_0| < m$ or $|\tilde{{\cal I}}_1| < m$, then abort the protocol. Otherwise, ${\cal P}_2$ sets
${\cal I}_B$ as the first $m$ indices from $\tilde{{\cal I}}_0$ and ${\cal I}_{\overline{B}}$ as the first $m$ indices from $\tilde{{\cal I}}_1$,
and sends $\Pi_1 = ({\cal I}_0, {\cal I}_1)$ to ${\cal P}_1$.

\STATE \label{protocol-BSEC-step3}
${\cal P}_1$ randomly picks functions $F:\{0,1\}^m \to \{0,1\}^l$ and $G:\{0,1\}^m \to \{0,1\}^\kappa$
from universal hash families, computes $S_b = F(X_{{\cal I}_b})$ and $C_b = G(X_{{\cal I}_b})$ for $b=0,1$,
and sends $\Pi_2 = (\Pi_{2,0}, \Pi_{2,1}, \Pi_{2,2})$, where 
\begin{align*}
\Pi_{2,b} = K_b \oplus S_b \mbox{ for } b=0,1, \mbox{ and } \Pi_{2,2}=(C_0,C_1).
\end{align*}

\STATE \label{protocol-BSEC-step4}
${\cal P}_2$ reproduces $\hat{X}_{{\cal I}_B}$ from $C_B$ and $Y_{{\cal I}_B}$, and computes $\hat{K}= \Pi_{2,B} \oplus F(\hat{X}_{{\cal I}_B})$.
\end{algorithmic}
\end{protocol}

In this section, we review the standard OT protocol for binary symmetric erasure channel (BSEC) 
with erasure probability $0< p \le \frac{1}{2}$ and crossover probability $0 < q <1$ given by
\begin{align*}
& W(0|0) = W(1|1) = (1-p)(1-q), \\
& W(1|0) = W(0|1) = (1-p)q, ~
W(\san{e}|0) = W(\san{e}|1) = p.
\end{align*}
Before describing the OT protocol, let us review two building blocks known as 
the information reconciliation and the privacy amplification (eg.~see \cite[Chapters 6 and 7]{TyaWat:book} for more detail):
\paragraph*{Information Reconciliation} We consider the situation such that the sender and the receiver observe correlated i.i.d. sources $X^m$ and $Y^m$;
the sender transmits a message $\Pi$ to the receiver, and the receiver reproduces an estimate of $X^m$ by using the transmitted message and the side-information
$Y^m$. This problem is known as the source coding with side information (Slepian-Wolf coding with full side information), and it is known that
the receiver can reproduce $X^m$ with small error probability if, for some margin $\Delta>0$,  
the sender transmit a message of length $\kappa= \lceil m(H(X|Y)+\Delta)\rceil$ that is created by the universal hash family.
For instance, when $Y$ is the output of BSC with uniform input $X$ and the crossover probability is $q$, then it suffices for the sender to
transmit a message of length $\lceil m  (H(q)+\Delta) \rceil$.

\paragraph*{Privacy Amplification} It is a procedure to distill a secret key from a randomness that is partially known to the adversary.
In the OT construction, we consider the situation such that ${\cal P}_1$ and ${\cal P}_2$ (who plays the role of the adversary)
observes correlated i.i.d. sources $X^m$ and $Z^m$; furthermore, the adversary may observe additional message $\Pi$ (obtained 
during the information reconciliation). Then, for a randomly chosen function $F$ from universal hash family, the key $K=F(X^m)$ 
is almost uniform and independent of the adversary's observation $(Z^m,\Pi)$ provided that the length of the generated key 
is $l = \lfloor m(H(X|Z)- \Delta)\rfloor - \kappa$ for some margin $\Delta>0$, where $\kappa$ is the length of additional message $\Pi$.
Particularly, in the context of OT construction from BSEC, we use the privacy amplification for the case where $X$ is uniform
bit and $Z$ is constant; in that case, it suffices to set $l = \lfloor m(1-\Delta) \rfloor - \kappa$ to distill a secure secrete key.

The high-level flow of the standard protocol is as follows. 
First, ${\cal P}_1$ sends the uniform bit string $X^n$ over the BSEC 
and ${\cal P}_2$ receives $Y^n$. Then, ${\cal P}_2$ picks subsets ${\cal I}_0, {\cal I}_1 \subset [n]$ so that ${\cal I}_B$ consists of indices
without erasure and ${\cal I}_{\overline{B}}$ consists of indices with erasure. Since the erasure occurs obliviously to ${\cal P}_1$,
revealing the index sets $({\cal I}_0,{\cal I}_1)$ to ${\cal P}_1$ does not leak any information about $B$. 
On the other hand, by using randomly chosen functions from universal hash families, ${\cal P}_1$ generates a pair of secret keys
$(S_0,S_1)$, and sends $(K_0, K_1)$ to ${\cal P}_2$ by encrypting with the generated keys so that ${\cal P}_2$ can only decrypt $K_B$. 
Formally, the standard protocol is described in Protocol \ref{protocol-BSEC}. 
Since $0< p \le \frac{1}{2}$, part of non-erasure indices are discarded.\footnote{For $\frac{1}{2} < p <1$, 
we need to modify the discarding rule in Step \ref{protocol-BSEC-step2}.}
In the protocol, the random variable $V_i$ describes that the index $i$
is erasure if $V_i=1$; it is non-erasure and not discarded if $V_i=0$; and it is non-erasure but is discarded if $V_i=2$.

Now, we outline the security and the performance of Protocol \ref{protocol-BSEC}.
First, to verify that ${\cal P}_2$'s message $\Pi_1=({\cal I}_0,{\cal I}_1)$ does not leak any information about $B$ to ${\cal P}_1$,
let us introduce $\tilde{V}_i = V_i \oplus B$ if $V_i \in \{0,1\}$ and $\tilde{V}_i = V_i$ if $V_i=2$.
Then, note that ${\cal I}_0$ and ${\cal I}_1$ are the first $m$ indices of $\{ i : \tilde{V}_i=0\}$
and $\{ i : \tilde{V}_i=1\}$, i.e., $({\cal I}_0,{\cal I}_1)$ is a function of $\tilde{V}^n$.
Thus, it suffices to show that $I(X^n, \tilde{V}^n \wedge B)=0$. Since $I(X^n \wedge B) = 0$,
we will verify $I(\tilde{V}^n \wedge B|X^n)=0$.
Since $0 < p \le \frac{1}{2}$, note that
\begin{align*}
\Pr( V_i = 0 | X_i = x) = \Pr(V_i = 1 | X_i = x) = p
\end{align*}
for every $i \in [n]$, which implies
\begin{align*}
P_{\tilde{V}_i | X_i B}(v|x,0) &= P_{V_i|X_i}(v|x) \\
&= P_{V_i|X_i}(v \oplus 1|x) 
= P_{\tilde{V}_i|X_i B}(v|x,1)
\end{align*}
for $v \in \{0,1\}$. Also, we have $P_{\tilde{V}_i|X_i B}(2|x,b) = P_{V_i|X_i}(2|x)$. Thus, 
we have $I(\tilde{V}^n \wedge B|X^n) = 0$.

For a small margin $\Delta >0$, if we set $m = \lfloor n(p-\Delta) \rfloor$, then
the protocol is not aborted in Step \ref{protocol-BSEC-step2} with high probability.
Furthermore, since the crossover probability between $X_i$ and $Y_i$ conditioned on $V_i=0$ is
\begin{align*}
P_{Y_i|X_i V_i}(1|0,0) = P_{Y_i|X_i V_i}(0|1,0) = q,
\end{align*}
if we set $\kappa = \lceil m(H(q) + \Delta) \rceil$, then, by the result on the information reconciliation,
${\cal P}_2$ can reproduce $X_{{\cal I}_B}$ with small error probability in Step \ref{protocol-BSEC-step4}.
Finally, since $Y_i=\san{e}$ for $i\in {\cal I}_{\overline{B}}$,
if we set $l = \lfloor m(1-\Delta) \rfloor - \kappa$, then, by the result on the privacy amplification,
the key $S_{\overline{B}}$ is almost uniform and independent of ${\cal P}_2$'s observation;
since $K_{\overline{B}}$ is encrypted by the one-time pad with key $S_{\overline{B}}$, $K_{\overline{B}}$ is not leaked to ${\cal P}_2$.
Consequently, 
by taking $\Delta>0$ sufficiently small and $n$ sufficiently large, Protocol \ref{protocol-BSEC} realizes
$(\varepsilon,\delta_1,\delta_2)$-secure OT of length roughly $np(1-H(q))$, i.e., we have\footnote{For 
$\frac{1}{2}\le p \le 1$, it is known that $C_{\mathtt{OT}}(W) = (1-p) (1- H(q))$.}
\begin{align} \label{eq:OT-capacity-BSEC}
C_{\mathtt{OT}}(W) \ge p (1- H(q)).
\end{align}
In the next section, we improve on this lower bound
by introducing a recursive protocol.

\section{Recursive Protocol} \label{section:recursive}

\begin{figure}[tb]
\centering{
\includegraphics[width=0.8\linewidth]{./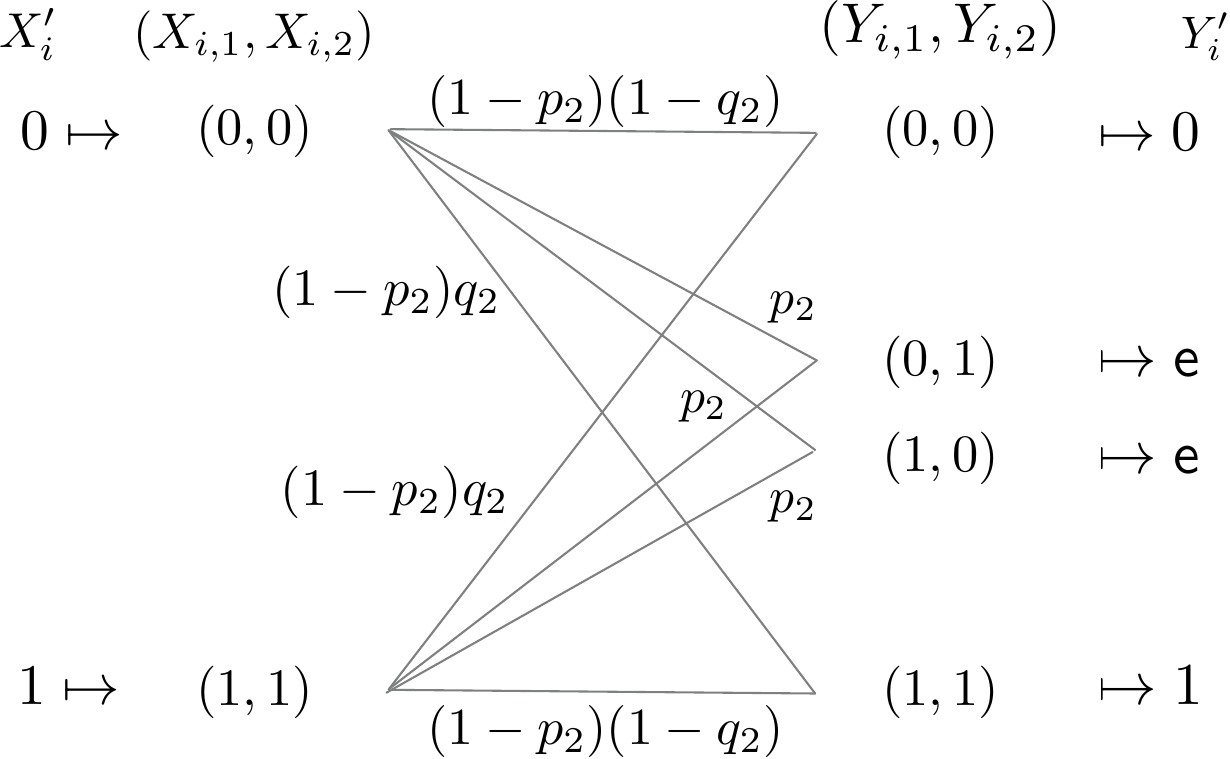}
\caption{A description emulated BSEC induced from the BSC; the erasure probability $p_2$ and the crossover probability $q_2$
are computed by \eqref{eq:erasure-crossover-probability-1}.}
\label{Fig:emulated-BSEC}
}
\end{figure}

Let us consider the binary symmetric channel (BSC) with crossover probability $0 < q_1 < 1$. 
Since there is no erasure symbol in the BSC, Protocol \ref{protocol-BSEC} cannot be used directly.
In order to apply Protocol \ref{protocol-BSEC} to the BSC, the procedures of alphabet extension and erasure emulation
have been used in the literature \cite[Example 1]{AhlCsi13}. More specifically, the parties divide the index set $[n]$ into blocks of length $2$ 
(for simplicity, assume that $n$ is even number). Then, for $i$th block, ${\cal P}_1$ randomly transmit 
$(X_{i,1},X_{i,2})=(0,0)$ or $(X_{i,1},X_{i,2})=(1,1)$. If ${\cal P}_2$ receives $(Y_{i,1},Y_{i,2}) \in \{(0,1), (1,0)\}$,
then ${\cal P}_2$ regards the received symbols as erasure $\mathsf{e}$
since posterior probability distribution of $(0,0)$ and $(1,1)$ is uniform; otherwise, ${\cal P}_2$ treats
the received symbols $(Y_{i,1},Y_{i,2})$ as they are. By relabeling $(0,0) \mapsto 0$, $(1,1) \mapsto 1$,
$(0,1) \mapsto \mathsf{e}$, and $(1,0)\mapsto \mathsf{e}$, we obtain an emulated BSEC
from $X_i^\prime$ to $Y_i^\prime$ (cf.~Fig.~\ref{Fig:emulated-BSEC}):
\begin{align*}
P_{Y_i^\prime|X_i^\prime}(0|0)=P_{Y_i^\prime|X_i^\prime}(1|1) &= (1-p_2)(1-q_2), \\
P_{Y_i^\prime|X_i^\prime}(1|0)=P_{Y_i^\prime|X_i^\prime}(0|1) &= (1-p_2)q_2, \\
P_{Y_i^\prime|X_i^\prime}(\san{e}|0)=P_{Y_i^\prime|X_i^\prime}(\san{e}|1) &= p_2,
\end{align*}
where 
\begin{align}
p_2 = 2q_1 (1-q_1) ,
~~~q_2 = \frac{q_1^2}{(1-q_1)^2 + q_1^2}. \label{eq:erasure-crossover-probability-1}
\end{align}
Then, we can apply Protocol \ref{protocol-BSEC} to this emulated BSEC; since the effective block-length is $\frac{n}{2}$,
we can derive the following lower bound on the OT capacity of the BSC $W_{\rom{BSC}(q_1)}$:
\begin{align} \label{eq:OT-capacity-BSC-1round}
C_{\mathtt{OT}}(W_{\rom{BSC}(q_1)}) \ge \frac{p_2}{2}(1- H(q_2)).
\end{align}

The idea of our recursive protocol is to conduct the above mentioned erasure emulation in multiple rounds
using interactive communication. To fix an idea, let us again consider the BSEC with erasure probability $0 < p_1 \le \frac{1}{2}$
and crossover probability $0 < q_1 <1$. We can directly apply Protocol \ref{protocol-BSEC};
however, there are remaining indices $\tilde{{\cal I}}_2 = \{ i : V_i =2\}$ that are not used in Protocol \ref{protocol-BSEC}. 
Let ${\cal I}_2 \subset \tilde{{\cal I}}_2$ be such that $|{\cal I}_2|$ is an even number exceeding $n(1-2p_1 -\Delta)$ for some margin $\Delta>0$;
we can take such ${\cal I}_2$ with high probability for sufficiently large $n$.
Note that, conditioned on $i \in {\cal I}_2$, the channel from $X_i$ to $Y_i$ is the BSC with crossover probability $q_1$.
In our recursive protocol, the parties divide the index set ${\cal I}_2$ into blocks of length $2$.
Then, for $i$th block, ${\cal P}_1$ reveals the parity $X_{i,1} \oplus X_{i,2}$ to ${\cal P}_2$; if the parity is $1$, ${\cal P}_1$
and ${\cal P}_2$ flip the value of $X_{i,1}$ and $Y_{i,1}$, respectively.
By the same relabeling as above, we obtain an emulated BSEC from $X_i^\prime$ to $Y_i^\prime$ with erasure probability $p_2$ and crossover 
probability $q_2$ given by \eqref{eq:erasure-crossover-probability-1}. Then, we can apply Protocol \ref{protocol-BSEC} to those additional indices to realize an additional string OT. 
Since the effective block-length of the additional part is $\frac{|{\cal I}_2|}{2} \ge \frac{n(1-2p_1-\Delta)}{2}$, 
we can derive the following improved lower bound on the OT capacity of BSEC $W_{\rom{BSEC}(p_1,q_1)}$:
\begin{align}
& C_{\mathtt{OT}}(W_{\rom{BSEC}(p_1,q_1)}) \label{eq:OT-capacity-BSEC-2round} \nonumber \\
&\ge p_1 (1- H(q_1)) + \frac{(1-2p_1)}{2} \cdot p_2 (1-H(q_2)).
\end{align}

\begin{protocol}[h] 
\caption{Recursive Protocol for BSEC} \label{protocol-recursive}
\begin{algorithmic}[1]
\STATE \label{protocol-recursive-step1}
${\cal P}_1$ sends the uniform bit string $(X^{(1)}_1,\ldots,X^{(1)}_n)$ 
over the BSEC $W^n$, and ${\cal P}_2$ receives $(Y^{(1)}_1,\ldots,Y^{(1)}_n)$.

\STATE \label{protocol-recursive-step2}
Let ${\cal I}^{(0)} = [n]$, and the parties conduct Steps \ref{protocol-recursive-step3}-\ref{protocol-recursive-step6}
for $t=1,\ldots,T$ (Step \ref{protocol-recursive-step6} is not conducted for $t=T$).

\STATE \label{protocol-recursive-step3}
For each $i\in {\cal I}^{(t-1)}$, ${\cal P}_2$ generates $V^{(t)}_i \in \{0,1,2\}$ as follows:
\begin{itemize}
\item If $Y^{(t)}_i = \san{e}$, then set $V^{(t)}_i=1$;

\item 
If $Y^{(t)}_i \in \{0,1\}$, then set $V^{(t)}_i=0$ with probability $\frac{p_t}{1-p_t}$ and set $V^{(t)}_i=2$ with probability $\frac{1-2p_t}{1-p_t}$.
\end{itemize}
Then, ${\cal P}_2$ sets $\tilde{{\cal I}}^{(t)}_b = \{ i \in {\cal I}^{(t-1)}: V^{(t)}_i = b\}$ for $b=0,1,2$. If 
$|\tilde{{\cal I}}^{(t)}_0| < m_t$ or $|\tilde{{\cal I}}^{(t)}_1| < m_t$
or $|\tilde{{\cal I}}^{(t)}_2|< 2 n_t$, then abort the protocol. Otherwise, ${\cal P}_2$ sets
${\cal I}^{(t)}_B$ as the first $m_t$ indices from $\tilde{{\cal I}}^{(t)}_0$, ${\cal I}^{(t)}_{\overline{B}}$ as the first $m_t$ indices from $\tilde{{\cal I}}^{(t)}_1$,
and ${\cal I}^{(t)}_2$ as the first $2n_t$ indices from $\tilde{{\cal I}}^{(t)}_2$,
and sends $\Pi^{(t)}_1 = ({\cal I}^{(t)}_0, {\cal I}^{(t)}_1, {\cal I}^{(t)}_2)$ to ${\cal P}_1$.

\STATE \label{protocol-recursive-step4}
${\cal P}_1$ randomly picks functions $F^{(t)}:\{0,1\}^{m_t} \to \{0,1\}^{l_t}$ and $G^{(t)}:\{0,1\}^{m_t} \to \{0,1\}^{\kappa_t}$
from universal hash families, computes $S^{(t)}_b = F^{(t)}(X^{(t)}_{{\cal I}^{(t)}_b})$ and $C_b^{(t)} = G^{(t)}(X^{(t)}_{{\cal I}^{(t)}_b})$ for $b=0,1$,
and sends $\Pi^{(t)}_2 = (\Pi^{(t)}_{2,0}, \Pi^{(t)}_{2,1}, \Pi^{(t)}_{2,2})$, where, for $b=0,1$, 
\begin{align*}
\Pi^{(t)}_{2,b} &= K^{(t)}_b \oplus S^{(t)}_b 
\mbox{ and } \Pi^{(t)}_{2,2}=(C^{(t)}_0,C^{(t)}_1).
\end{align*}

\STATE \label{protocol-recursive-step5}
${\cal P}_2$ reproduces $\hat{X}^{(t)}_{{\cal I}^{(t)}_B}$ from $C^{(t)}_B$ and $Y^{(t)}_{{\cal I}_B}$, and computes 
$\hat{K}^{(t)}= \Pi^{(t)}_{2,B} \oplus F^{(t)}(\hat{X}^{(t)}_{{\cal I}^{(t)}_B})$.

\STATE \label{protocol-recursive-step6}
The parties divide the indices ${\cal I}^{(t)}_2$ into blocks of length $2$. For $i$th block, 
${\cal P}_1$ reveals the parity $X^{(t)}_{i,1} \oplus X^{(t)}_{i,2}$ to ${\cal P}_2$; if the parity is $1$,
${\cal P}_1$ and ${\cal P}_2$ flip the value of $X^{(t)}_{i,1}$ and $Y^{(t)}_{i,1}$, respectively.
By relabeling $(0,0) \mapsto 0$, $(1,1) \mapsto 1$,
$(0,1) \mapsto \mathsf{e}$, and $(1,0)\mapsto \mathsf{e}$, the parties create 
sequences $(X^{(t+1)}_i : i \in {\cal I}^{(t)})$ and $(Y^{(t+1)}_i : i \in {\cal I}^{(t)})$ with 
index set ${\cal I}^{(t)}$ of length $|{\cal I}^{(t)}| = n_t$. Increment $t \to t+1$, and go to Step \ref{protocol-recursive-step3}. 

\end{algorithmic}
\end{protocol}

More generally, in $T$ rounds protocol, we recursively emulate the BSEC with 
erasure probability $p_{t+1}$ and crossover probability $q_{t+1}$ given by
\begin{align*}
p_{t+1} = 2 q_t (1-q_t),~~~q_{t+1} = \frac{q_t^2}{(1-q_t)^2 + q_t^2}
\end{align*}
for $t=1,\ldots,T-1$. Set $n_0=n$, and set
$n_t = \lceil n_{t-1}(1-2p_t-\Delta)/2 \rceil$ for $t=1,\ldots,T$.
Furthermore, set $m_t = \lceil n_{t-1}(p_t-\Delta)\rceil$, $\kappa_t=\lceil m_t(H(q_t)+\Delta)\rceil$, and $l_t = \lfloor m_t(1-\Delta) \rfloor -\kappa_t$
for $t=1,\ldots,T$.
In our protocol, the parties seek to realize string OT of length $l_t$ in each round, where
${\cal P}_2$'s input $B$ is the same in each round. 
The detail of our recursive protocol is described in Protocol \ref{protocol-recursive} ($n$ is assumed to be even number),
and its performance is given as follows.
\begin{theorem} \label{theorem:recursive}
By using Protocol \ref{protocol-recursive} with $T$ rounds, we can derive the following lower bound
on the OT capacity of BSEC $W_{\rom{BSEC}(p_1,q_1)}$:\footnote{For $t=1$, $\prod_{j=1}^{t-1} \frac{(1-2p_j)}{2}$ is regarded as $1$.}
\begin{align*}
 C_{\mathtt{OT}}(W_{\rom{BSEC}(p_1,q_1)}) 
\ge \sum_{t=1}^T \bigg\{ \prod_{j=1}^{t-1} \frac{(1-2p_j)}{2} \bigg\} p_t(1-H(q_t)).
\end{align*}
\end{theorem}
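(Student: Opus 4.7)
The plan is to analyze Protocol \ref{protocol-recursive} by induction on the round index $t$, reducing each round to the single-round argument already given for Protocol \ref{protocol-BSEC}. The base case $t=1$ is exactly Protocol \ref{protocol-BSEC} applied to the given BSEC$(p_1,q_1)$ with block length $n$. For the inductive step, the key structural claim is that, conditioned on the transcript exchanged over the noiseless channel up through the end of Step \ref{protocol-recursive-step6} of round $t$, the pairs $(X^{(t+1)}_i, Y^{(t+1)}_i)$ for $i \in \mathcal{I}^{(t)}$ are i.i.d.\ and distributed as a uniformly-fed BSEC$(p_{t+1}, q_{t+1})$. This is because conditioning on $V^{(t)}_i=2$ reduces the inherited channel to a uniformly-fed BSC$(q_t)$; pairing those indices, revealing the parity $X^{(t)}_{i,1}\oplus X^{(t)}_{i,2}$, flipping accordingly, and applying the relabeling is precisely the erasure-emulation construction of Figure \ref{Fig:emulated-BSEC}, whose output distribution is computed in \eqref{eq:erasure-crossover-probability-1}. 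Note that $p_{t+1}=2q_t(1-q_t)\le 1/2$, so the assumption $p_t\le 1/2$ needed for the symmetrization in Step \ref{protocol-recursive-step3} is automatically maintained throughout the recursion.

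Given the structural claim, each round is an instance of Protocol \ref{protocol-BSEC} with block length $n_{t-1}$ on BSEC$(p_t,q_t)$, so the round-$t$ analysis is immediate. Concretely: security for $\mathcal{P}_2$ uses $\Pr(V^{(t)}_i=0|X^{(t)}_i)=\Pr(V^{(t)}_i=1|X^{(t)}_i)=p_t$ and the shifted $\tilde V^{(t)}_i = V^{(t)}_i\oplus B$ argument of Section \ref{section:BSEC} to give $I(\tilde V^{(t)}_{\mathcal{I}^{(t-1)}}\wedge B\,|\,X^{(t)}_{\mathcal{I}^{(t-1)}})=0$, so $\Pi^{(t)}_1$ leaks no information about $B$; correctness of the round follows from information reconciliation with $\kappa_t=\lceil m_t(H(q_t)+\Delta)\rceil$; and privacy of $S^{(t)}_{\overline B}$ against $\mathcal{P}_2$ follows from privacy amplification with $l_t=\lfloor m_t(1-\Delta)\rfloor-\kappa_t$, using that $Y^{(t)}_i=\san{e}$ for $i\in\mathcal{I}^{(t)}_{\overline B}$.

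The aggregate string OT has length $l=\sum_{t=1}^T l_t$ obtained by concatenating the round keys. Correctness across rounds follows from a union bound. Security for $\mathcal{P}_1$ aggregates because the per-round argument conditions on the round inputs, and the recursive Step \ref{protocol-recursive-step6} is a deterministic function of the $X^{(t)}$ variables, so the joint view of $\mathcal{P}_1$ remains independent of $B$. Security for $\mathcal{P}_2$ aggregates because the indices used for OT in round $t$ lie in $\tilde{\mathcal{I}}^{(t)}_0\cup\tilde{\mathcal{I}}^{(t)}_1$, whereas those fed into round $t+1$ lie in $\tilde{\mathcal{I}}^{(t)}_2$, so the source randomness exploited in round $t$ is disjoint from that of all subsequent rounds; the earlier transcript can therefore be absorbed into the side information in the privacy-amplification bound without degrading $l_t$. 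Finally, taking $\Delta\downarrow 0$ and $n\to\infty$, the recursion $n_t=\lceil n_{t-1}(1-2p_t-\Delta)/2\rceil$ gives $n_{t-1}/n\to\prod_{j=1}^{t-1}(1-2p_j)/2$ and $l_t/n_{t-1}\to p_t(1-H(q_t))$, yielding the claimed sum.

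The main obstacle is the cross-round bookkeeping: one must carefully verify that the parity messages in Step \ref{protocol-recursive-step6}, together with the earlier rounds' hashes and index sets, neither compromise the hiding of $B$ in later rounds nor reduce the conditional min-entropy available for privacy amplification. The clean disjointness of $\tilde{\mathcal{I}}^{(t)}_0,\tilde{\mathcal{I}}^{(t)}_1,\tilde{\mathcal{I}}^{(t)}_2$ combined with the symmetry of the relabeling $(0,0)\mapsto 0$, $(1,1)\mapsto 1$ under the parity revelation makes the induction work, but the conditioning chain that preserves i.i.d.\ BSEC$(p_{t+1},q_{t+1})$ statistics given everything revealed so far is the step that deserves careful unpacking.
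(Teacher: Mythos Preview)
Your proposal is correct and follows essentially the same approach as the paper: verify that each round is an instance of Protocol~\ref{protocol-BSEC} applied to the emulated BSEC$(p_t,q_t)$ on block length $n_{t-1}$, invoke the single-round security and correctness arguments, and aggregate via independence of the randomness across rounds. Your treatment is in fact more explicit than the paper's about the cross-round conditioning (the paper simply asserts that ``the keys created in other rounds are independent''), and your observation that $p_{t+1}=2q_t(1-q_t)\le 1/2$ is automatically maintained is a useful sanity check the paper omits.
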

\begin{proof}
First, note that, if $P_{Y_i^{(t)}|X_i^{(t)}}$ is BSEC with erasure probability $p_t$ and crossover probability $q_t$,
then, conditioned on $V_i^{(t)}=v \in \{0,2\}$, the channel $P_{Y_i^{(t)}|X_i^{(t)}V_i^{(t)}}(\cdot |\cdot,v)$ is
BSC with crossover probability $q_t$. Furthermore, we can verify that the emulated channel $P_{Y_i^{(t+1)}|X_i^{(t+1)}}$
in Step \ref{protocol-recursive-step6} is BSEC with erasure probability $p_{t+1}$ and crossover probability $q_{t+1}$.

In each round $t$, by our choice of $m_t$ and $n_t$, the probability of abortion in Step \ref{protocol-recursive-step3}
can be arbitrarily small for sufficiently large $n$. We can also verify that sending the index sets $({\cal I}_0^{(t)}, {\cal I}_1^{(t)},{\cal I}_2^{(t)})$
does not leak any information about $B$ exactly in the same manner as Protocol \ref{protocol-BSEC}.
Furthermore, by our choice of $\kappa_t$ and $l_t$, ${\cal P}_2$ can recover the key
$S_B^{(t)}$ with small error probability by the result on information reconciliation, and 
the key $S_{\overline{B}}^{(t)}$ is not leaked to ${\cal P}_2$ by the result on privacy amplification.
Note that the keys created in other rounds are independent, and the accumulated error probability
and leakage are the summation of those in each round.

Finally, the lower bound on the OT capacity follows from our choice of $l_t$ and 
that the rate of protocol is $\frac{1}{n}\sum_{t=1}^T l_t$.
\end{proof}

Note that the lower bound in Theorem \ref{theorem:recursive} subsumes \eqref{eq:OT-capacity-BSEC} when $T=1$;
\eqref{eq:OT-capacity-BSC-1round} when $p_1=0$ and $T=2$;\footnote{Technically speaking, since $m_1=0$ when $p_1=0$, 
we should skip the first round ($t=1$) to avoid the abortion in Step \ref{protocol-recursive-step3}.} 
and \eqref{eq:OT-capacity-BSEC-2round} when $T=2$.
The lower bounds on the OT capacity in Theorem \ref{theorem:recursive} for $T=1, 2,3$ are compared in Fig.~\ref{Fig:comparison}.
For comparison, we also plotted the upper bound in \eqref{eq:upper-1}.\footnote{Since it is computationally infeasible to 
evaluate the minimization for auxiliary alphabet of size $|{\cal J}|=|{\cal X}||{\cal Y}|+2 = 8$,
we only evaluated for $|{\cal J}|=2$, which is still a valid upper bound on the OT capacity.}
We find that, for crossover probabilities around $0.5$, the improved lower bound almost matches with the upper bound;
however, even for a larger value of $T$, we numerically verified that there is a gap between the lower bound and the upper bound.

\begin{figure}[tb]
\centering{
\includegraphics[width=0.9\linewidth]{./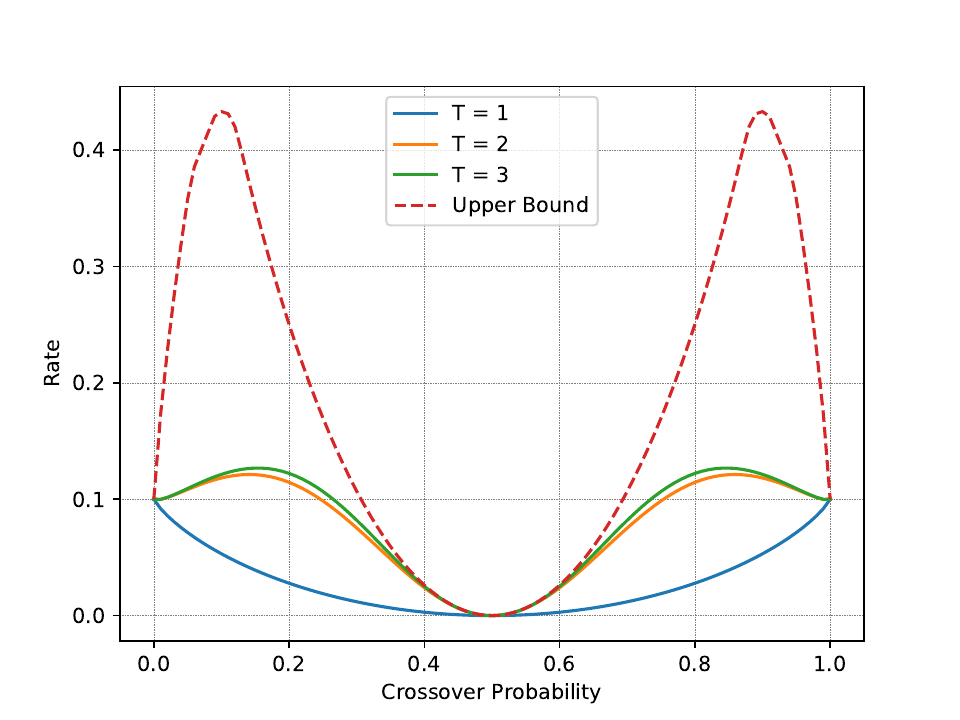}
\caption{A comparison of the lower bound on the OT capacity in Theorem \ref{theorem:recursive} 
for $T=1,2,3$ and the upper bound on the OT capacity in \eqref{eq:upper-1}, where the horizontal axis is
$0 \le q_1 \le 1$ and $p_1=0.1$ is fixed.}
\label{Fig:comparison}
}
\end{figure}

\section{Discussions} \label{section:discussion}

In this paper, we have proposed an OT protocol that recursively emulates erasure events using interactive
communication. Compared to the standard protocol in Protocol \ref{protocol-BSEC},
our protocol, Protocol \ref{protocol-recursive}, uses interactive communication in $T$ rounds.
Even though we exclusively considered the BSC and the BSEC in this paper, the idea
of recursively emulating erasure can be applied to more general channels.

\begin{protocol}[h] 
\caption{Protocol for Example \ref{example:necessity-interaction}} \label{protocol-example}
\begin{algorithmic}[1]
\STATE \label{protocol-example-step1}
${\cal P}_1$ transmit the uniform input $(X_{i,1},X_{i,2})$ for $1 \le i \le n$.

\STATE \label{protocol-example-step2}
For fixed margin $\Delta>0$, let ${\cal I}^{(1)}_{\san{e},1}$ be the first $n\big(\frac{1}{2}-\Delta\big)$ indices 
of $\{ i \in [n] : Y_{i,1}\neq \san{e}, Y_{i,2} = \san{e} \}$ and ${\cal I}^{(1)}_{\san{e},2}$ be the first $n\big(\frac{1}{2}-\Delta\big)$ indices 
of $\{ i \in [n] : Y_{i,1} = \san{e}, Y_{i,2} \neq \san{e} \}$.
Then, ${\cal P}_2$ sends $ (\hat{{\cal I}}^{(1)}_{\san{e},1}, \hat{{\cal I}}^{(1)}_{\san{e},2}) = ({\cal I}^{(1)}_{\san{e},1}, {\cal I}^{(1)}_{\san{e},2})$ if $B=0$
and  ${\cal P}_2$ sends $(\hat{{\cal I}}^{(1)}_{\san{e},1}, \hat{{\cal I}}^{(1)}_{\san{e},2})= ({\cal I}^{(1)}_{\san{e},2}, {\cal I}^{(1)}_{\san{e},1})$ if $B=1$ 
to ${\cal P}_1$.
Let ${\cal I}^{(1)}_2$ be the first $n\big(\frac{1}{2}-\Delta\big)$ indices of $\{i\in [n]: Y_{i,1}\neq \san{e}, Y_{i,2}\neq \san{e}\}$
(if there do not exist enough indices, abort the protocol).
Then, ${\cal P}_2$ also sends ${\cal I}^{(1)}_2$ to ${\cal P}_1$.

\STATE \label{protocol-example-step3}
Let $S_0$ be the concatenation of $(X_{i,1} : i \in \hat{{\cal I}}^{(1)}_{\san{e},1})$ and 
$(X_{i,2} : i \in \hat{{\cal I}}^{(1)}_{\san{e},2})$, and let $S_1$ be the concatenation of
$(X_{i,2} : i \in \hat{{\cal I}}^{(1)}_{\san{e},1})$ and $(X_{i,1} : i \in \hat{{\cal I}}^{(1)}_{\san{e},2})$.
Then, ${\cal P}_1$ sends $\Pi^{(1)}_0 = K^{(1)}_0 \oplus S^{(1)}_0$ and $\Pi^{(1)} = K^{(1)}_1 \oplus S^{(1)}_1$ to ${\cal P}_2$.

\STATE \label{protocol-example-step4}
Let $\hat{S}^{(1)}_B$ be the concatenation of $(Y_{i,1} : i \in {\cal I}^{(1)}_{\san{e},1})$ 
and $(Y_{i,2} : i \in {\cal I}^{(1)}_{\san{e},2})$. Then, ${\cal P}_2$ recovers $\hat{K}^{(1)} = \Pi^{(1)}_B \oplus \hat{S}^{(1)}_B$.

\STATE \label{protocol-example-step5}
For each $i \in {\cal I}^{(1)}_2$, ${\cal P}_1$ reveals the parity $X_{i,1} \oplus X_{i,2}$ to ${\cal P}_2$.
If the parity is $1$, ${\cal P}_1$ and ${\cal P}_2$ flip the value of $X_{i,1}$ and $Y_{i,1}$, respectively.
By relabeling $(0,0) \mapsto 0$, $(1,1) \mapsto 1$,
$(0,1) \mapsto \mathsf{e}$, and $(1,0)\mapsto \mathsf{e}$, the parties create
$(X^{(2)}_i : i \in {\cal I}^{(1)}_2)$ and $(Y^{(2)}_i : i \in {\cal I}^{(1)}_2)$, respectively. 

\STATE \label{protocol-example-step6}
${\cal P}_2$ sets ${\cal I}^{(2)}_B$ as the first $n\big(\frac{1}{4}-\Delta \big)$ indices of $\{ i \in {\cal I}^{(1)}_2 : Y^{(2)}_i \neq \san{e} \}$
and ${\cal I}^{(2)}_{\overline{B}}$ as the first $n\big(\frac{1}{4}-\Delta \big)$ indices of $\{ i \in {\cal I}^{(1)}_2 : Y^{(2)}_i = \san{e} \}$
(if there do not exist enough indices, abort the protocol).
Then, ${\cal P}_2$ sends $({\cal I}^{(2)}_0, {\cal I}^{(2)}_1)$ to ${\cal P}_2$.

\STATE \label{protocol-example-step7}
For $b=0,1$, 
${\cal P}_1$ sets $S^{(2)}_b = (X^{(2)}_i : i \in {\cal I}^{(2)}_b)$ and sends $\Pi^{(2)}_b = K^{(2)}_b \oplus S^{(2)}_b$ to ${\cal P}_1$.

\STATE \label{protocol-example-step8}
By setting $\hat{S}^{(2)}_B = (Y^{(2)}_i : i \in {\cal I}^{(2)}_B)$, ${\cal P}_2$ recovers 
$\hat{K}^{(2)} = \Pi^{(2)}_B \oplus \hat{S}^{(2)}_B$.
\end{algorithmic}
\end{protocol}

Even though it is not clear if our improved lower bound on the OT capacity is tight or not,
it seems necessary to use interactive communication in multiple rounds to attain the OT capacity in general.
As a further evidence for potential necessity of interactive communication in multiple rounds, 
let us consider the following simple example.
\begin{example} \label{example:necessity-interaction}
For ${\cal X} = \{0,1\}^2$ and ${\cal Y} = \{0,1,\san{e}\}^2$, let $W$ be the channel given by
\begin{align*}
& W(x_1,\san{e}|x_1,x_2) = W(\san{e},x_2|x_1,x_2) = W(x_1,x_2|x_1,x_2) = \frac{1}{4}, \\
&W(x_1,x_2 \oplus 1|x_1,x_2) = W(x_1 \oplus 1, x_2 |x_1,x_2) = \frac{1}{8}.
\end{align*}
For this channel, we can consider Protocol \ref{protocol-example} that is based on a similar idea as Protocol \ref{protocol-recursive}. 
Roughly, the parties use the erasure events $(*,\san{e})$ or $(\san{e},*)$ to realize OT at rate $\frac{1}{2}$ in the first phase;
when neither $(*,\san{e})$ nor $(\san{e},*)$ occur, by revealing the parity, the parties emulate erasure channel
with erasure probability $\frac{1}{2}$ to realize OT at rate $\frac{1}{2}\cdot \frac{1}{2}=\frac{1}{4}$ in the second phase.
In total, the OT capacity is lower bounded by $\frac{3}{4}$. In fact, for this channel, it can be
verified that the maximum of $I(X \wedge Y)$ is $\frac{3}{4}$, which is attained by the uniform input distribution.
Thus, by combining with the upper bound \eqref{eq:upper-2}, the OT capacity of this channel is 
$C_{\mathtt{OT}}(W) = \frac{3}{4}$.

It is unlikely that the OT capacity of this channel can be attained with
only 1 round of communication. If ${\cal P}_1$ reveals the parity $X_{i,1}\oplus X_{i,2}$
for an index such that $(Y_{i,1},Y_{i,2}) \in \{(*,\san{e}), (\san{e},*)\}$, that index become useless for OT;
thus, ${\cal P}_1$ should reveal the parity only after that ${\cal P}_2$ announces $(Y_{i,1},Y_{i,2}) \notin \{(*,\san{e}), (\san{e},*)\}$.
On the other hand, when $(Y_{i,1},Y_{i,2}) \notin \{(*,\san{e}), (\san{e},*)\}$, 
${\cal P}_2$ cannot recognize the emulated erasure event unless ${\cal P}_1$ reveals the parity.
Thus, it seems that $2$ rounds of interaction is unavoidable. 
In order to rigorously prove the necessity of $2$ rounds interaction, 
we need an upper bound on the OT capacity that is tailored for $1$ round protocols;
currently, all the known upper bounds do not take into account the number of rounds of OT protocols. 
\end{example}

\section*{Acknowledgment}

This work was supported in part by the Japan Society for the Promotion of Science (JSPS)
KAKENHI under Grant 20H02144, 23H00468, and 23K17455.

\newpage

\bibliographystyle{../../../09-04-17-bibtex/IEEEtran}
\bibliography{../../../09-04-17-bibtex/reference.bib}
\end{document}